\newtheorem{theorem}{Theorem}
\newcommand{\bra}[1]{\mbox{$\left\langle #1 \right|$}}
\newcommand{\ket}[1]{\mbox{$\left| #1 \right\rangle$}}
\newcounter{protocol}
\newcommand{\be}{\begin{equation}}
\newcommand{\ee}{\end{equation}}
\newcommand{\bea}{\begin{eqnarray}}
\newcommand{\eea}{\end{eqnarray}}
\definecolor{mygreen}{rgb}{0,0.5,0}
\definecolor{myblue}{rgb}{0,0,0.75}
\definecolor{mymagenta}{cmyk}{0,1,0,0.12}
\begin{document}
\title{Quantum Private Function Evaluation}
\author{Zhu Cao}
\email{caozhu@ecust.edu.cn}
\affiliation{Key Laboratory of Smart Manufacturing in Energy Chemical Process, Ministry of Education, East China University of Science and Technology, Shanghai 200237, China}

\begin{abstract}
Private function evaluation is a task that aims to obtain the output of a function while keeping the function secret. 
So far its quantum analogue has not yet been articulated.
In this study, we initiate the study of quantum private function evaluation, the quantum analogue of classical 
private function evaluation. We give a formal definition of quantum private function evaluation and present two schemes together with their security proofs.  
We then give an experimental demonstration of the scheme. Finally we apply quantum private function evaluation to quantum copy protection to illustrate its usage.
\end{abstract}

\maketitle

\section{Introduction}
\label{sec:introduction}



Consider the following scenario. A software company wishes to provide a charged quantum software service on the cloud to the customers,
but it does not wish to reveal the code of the quantum software as the users will just download the code and not pay any money for the 
web service. The customer, on the other hand, wishes to apply the software on their quantum data for some information processing, but 
does not wish to reveal their private data to the software company as the data may contain sensitive information. For the classical
version of the scenario, it can be dealt with by classical private function evaluation, which 
has been widely studied since 1990 \cite{abadi1990secure,mohassel2014actively}.


Quantum private function evaluation (QPFE) is a quantum version of private function evaluation (PFE). 
While private function evaluation has been extensively studied in classical cryptography, somewhat surprisingly, up till now,
QPFE has not yet been studied. 
The setting of QPFE is as follows. Party 1 has the quantum function $\mathcal{E}$ and part of the quantum state $\rho_1$.
Party 2 has the other part of the quantum state $\rho_2$. (The case of $N$ parties is similar. Here we explain the case of two parties for simplicity.)

The goal of the two parties is to obtain 
\begin{equation}
\mathcal{E}(\rho_1, \rho_2) = (\mathcal{E}^{(1)}(\rho_1, \rho_2), \mathcal{E}^{(2)}(\rho_1, \rho_2))  \triangleq (\tau_1, \tau_2)
\end{equation}
without revealing each party's private quantum inputs.
Here, $\mathcal{E}^{(i)}(\rho_1, \rho_2)=\tau_i$ is the quantum output to party $i$.


The protocol needs to satisfy the following correctness and privacy properties:
\begin{enumerate}
\item Correctness: the malicious parties $\mathcal{A}$ cannot influence the results of the honest parties other than changing $\mathcal{A}$'s own inputs.
\item Privacy: The privacy contains two parts. For privacy against malicious party 2, we require party 2 to obtain 
no information on party 1's inputs and functions beyond $\mathcal{E}^{(2)}(\rho_1, \rho_2)$. For party 1, since it can change the function, privacy against 
malicious party 1 can only be achieved when party 1 has no function output. Otherwise, we can only
consider privacy against semi-honest party 1, namely party 1 that does exactly as the protocol dictates, but is 
curious about the other party's private input. 
\end{enumerate}
A formal formulation of the above statements will be presented later in the paper.


In this work, we initiate the study of QPFE. We first define QPFE.
Then we present two QPFE schemes together with their security proofs. The first scheme is more modular in the sense that it 
uses  multi-party quantum computation (MPQC) as a black box. Usually modularity makes schemes simpler and less error-prone in real implementations.
This scheme only relies on the same assumption as MPQC. (Currently it is known that a quantum one-way function is sufficient for MPQC  \cite{bartusek2021one}.)
 The second scheme does not use MPQC as a black box, and hence has the potential of being more efficient.
 We then give an experimental demonstration of QPFE based on the second scheme. To show the usefulness 
 of QPFE, we also apply the QPFE scheme to the task of quantum copy protection. We hope our work would stimulate further 
 research on developing alternative QPFE schemes and finding use of QPFE in other applications.


The organization of the rest of the paper is as follows. 
In Sec.~\ref{sec:relatedworks}, we review some related works.
In Sec.~\ref{sec:preli}, we fix the notation and restate some results from the literature that will be used later in the paper.
In Sec.~\ref{sec:def}, we present the definition of QPFE. 
In Sec.~\ref{sec:protocol}, we present two protocols of QPFE together with some extensions.
In Sec.~\ref{sec:securityproof}, we present the security proofs of the QPFE protocols. 
In Sec.~\ref{sec:exp}, we give an experimental demonstration of QPFE.
In Sec.~\ref{sec:appl}, we give an example application of QPFE to quantum copy protection.
Finally, in Sec.~\ref{sec:conclusion}, we conclude the paper and give a few outlooks. 

\section{Related work}
\label{sec:relatedworks}


\subsection{Classical private function evaluation }

Classical PFE has been widely studied since 1990 \cite{abadi1990secure}. 
Currently there are roughly two classes of classical PFE. One class of classical PFE is based
on universal circuits and multi-party computation (MPC) \cite{kiss2016valiant,gunther2017more}.
This line of work focuses on optimizing the implementation of universal circuits to achieve higher
efficiency of PFE. Once the universal circuit part is done, one puts the private function also as 
part of the input into the universal circuit and then applies ordinary MPC. The security proof 
for this type of PFE is almost trivial which is one of the benefits of such schemes.

The other class of classical PFE does not rely on 
 universal circuits \cite{mohassel2014actively,bingol2019efficient,biccer2020highly}.
 Compared to the previous type of PFE, this type of PFE can achieve higher efficiency by
 opening the black box of MPC. We refer the reader to the related work section of Ref.~\cite{biccer2020highly}
 for a recent review of this line of PFE. We only add the remark here that these classical PFEs 
 usually involve copying the output of a gate several times as the inputs of further gates. 
 This makes it unclear how to transfer these classical PFE schemes to the quantum scenario.

\subsection{Multi-party quantum computation}

PFE is related to MPC. Both tasks consider the task of computing a joint function
with multiple parties. Both tasks wish to protect the privacy of the participants' data. There are however also essential 
differences between PFE and MPC. The function to be computed is private information in PFE, while it is public in MPC. 
The quantum analogue of MPC, called MPQC, has been studied since 2002 \cite{crepeau2002secure}. 

Since MPQC is closely related to quantum PFE, we give a brief review of MPQC here. 
A more comprehensive review of the field of MPQC can be found in Ref.~\cite{bartusek2021round}. 
In the first work of MPQC \cite{crepeau2002secure}, Crepeau {\it et al.} present an
information-theoretically secure MPQC protocol when the proportion of malicious parties is less than $1/6$.
Since then, the results have been strengthened in various directions. In 2006, Ben-Or {\it et al.} \cite{ben2006secure}     
showed that MPQC can be done information theoretically with a strict honest majority. This work is tight in the sense that, even with one more malicious party, MPQC will be completely impossible information theoretically. In 2020, Dulek {\it et al.} \cite{dulek2020secure} showed that with computational assumptions, MPQC can be done with a dishonest majority (even if just one party is honest).
The computation assumption needed for MPQC is that a quantum one-way function exists \cite{bartusek2021one}. 
There are also works concerning the efficiency of MPQC. For example, in 2021, a constant-round MPQC is given for the case of a dishonest majority \cite{bartusek2021round}, reducing the round complexity of previous works.

\subsection{Quantum fully homomorphic encryption}

Quantum fully homomorphic encryption (QFHE) considers the following scenario. One party, called the user, has private quantum data $x$
and wishes to get the result of $F(x)$ but it does not have the computational resource to do so. Instead, it 
asks the server to do the computation but does not wish to reveal $x$ to the server. The user encrypts the data $x$ 
as $\mathcal{E}(x)$ and sends it to the server. The server computes homomorphically on  $\mathcal{E}(x)$ to obtain $\mathcal{E}(F(x))$
which is sent back to the user. The user decrypts $\mathcal{E}(F(x))$ as $F(x)$. A notable distinction between QFHE and blind quantum
computation is that QFHE does not allow interactions between the user and the server in the middle. The only interaction allowed 
is the message that the user sends to the server at the beginning, and the message that the server sends to the user at the end.

It has been shown that quantum FHE is not possible for general functions \cite{yu2014limitations}. Hence, the research on QFHE 
is mainly divided into two classes. The first class of research considers adding computational assumptions for achieving QFHE.
For example, in 2015, Broadbent {\it et al.} \cite{broadbent2015quantum} proposed a QFHE scheme with a constant number of T gates based 
on computational assumptions. Later, in 2016, the restriction on the number of T gates is removed by Dulek {\it et al.} \cite{dulek2016quantum}.
Several further improvements are made \cite{alagic2017quantum,mahadev2018classical2,brakerski2018quantum}. 
The second class of research considers QFHE with special functions instead of general functions. For example, Ouyang {\it et al.} 
showed an informational-theoretically secure QFHE scheme with a constant number of T gates in 2018 \cite{ouyang2018quantum}.

QFHE \cite{dulek2016quantum} can be used to fulfill the goal of QPFE, but suffers from the problems of high complexity and 
strong assumptions. All known protocols of QFHE for general functions \cite{dulek2016quantum,alagic2017quantum,mahadev2018classical2,brakerski2018quantum}
rely on the assumption that learning with errors is quantum hard. Later in our paper, we will show that 
for QPFE with a dishonest majority, the existence of a quantum one-way function is enough, just as the case for MPQC \cite{bartusek2021one}.

 \section{Preliminaries} 
\label{sec:preli}

In this section, we present some definitions and results from the literature which will be used later in the paper.
 In the first subsection, we review a security framework that will be later used to define the security of QPFE. The remaining three subsections
describe three primitives that will be later used to construct QPFE protocols.

\subsection{General secure protocol}
We begin by reviewing how to define the security of a general secure protocol $\Pi$.
Before presenting the security definition, we first define a few notations.
Let $\lambda$ be the security parameter. For two classes of binary random variables
$\{X_\lambda\}_\lambda$ and $\{Y_\lambda\}_\lambda$, we write $X_\lambda \approx Y_\lambda$
as a shorthand for
\begin{equation}
| \textrm{Pr}(X_\lambda=1)  - \textrm{Pr}(Y_\lambda=1)  |  \le \mu(\lambda), \quad \forall \lambda,
\end{equation}
where $\textrm{Pr}$ stands for probability and $\mu$ is a negligible function of $\lambda$.  Here, 
a negligible function of $\lambda$ is defined as a positive function which tends to 0 faster than the inverse of any positive polynomial function
as $\lambda$ goes to infinity.
An example of a negligible function of $\lambda$ is $2^{-\lambda}$, which satisfies this definition.

We employ the simulation-based security proof framework \cite{canetti2001universally} to define the security of a general protocol. 
In this framework, a protocol is considered secure if any environment with a bounded amount of computational resources cannot distinguish between the actual protocol execution  (usually called the real world)  and an ideal protocol execution (usually called the ideal world) with a high probability. In an ideal world, there is a trusted third-party that fulfills all security properties that the protocol should satisfy, which is usually called the ideal functionality. To ensure the environment cannot distinguish the two cases, one plausible way is to add a simulator wrapping the adversary such that the interaction of the environment with the simulator and the ideal functionality in the ideal world is identical to the interaction of the environment with the adversary and the protocol in the real world.

In more details, let $\textrm{Exec}[\Pi, \mathcal{A}, \mathcal{Z}]$ be the output distribution of the environment $\mathcal{Z}$ for a protocol
$\Pi$ and an adversary $\mathcal{A}$ compatible with the protocol $\Pi$. The relations
among $\mathcal{Z}$, $\Pi$, and $\mathcal{A}$ are shown in Fig.~\ref{fig:RvsI}(a). They are capable of communicating 
with each other in the real world.
Let $\textrm{Exec}[\mathcal{F}, \mathcal{S}, \mathcal{Z}]$ be the output distribution of the environment $\mathcal{Z}$ for
an ideal functionality $\mathcal{F}$ and a simulator $S$. The relations
among $\mathcal{Z}$, $\mathcal{F}$, and $\mathcal{S}$ are shown in Fig.~\ref{fig:RvsI}(b). They are capable of communicating 
with each other in the ideal world.

\begin{figure}[htb]
\centering \includegraphics[width=6cm]{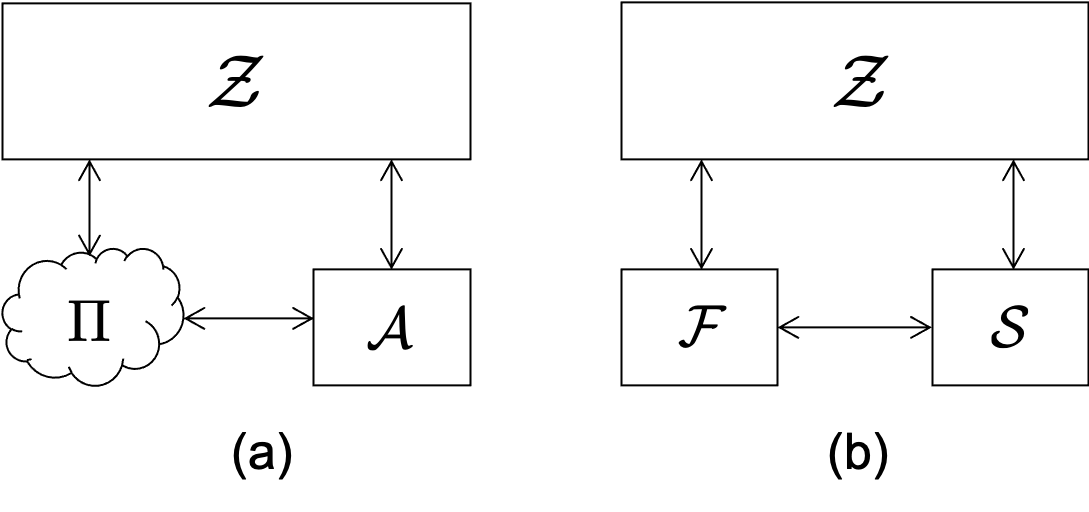}
\caption{(a) The real world. (b) The ideal world.}
\label{fig:RvsI}
\end{figure}

We say that a protocol $\Pi$ realizes an ideal functionality $\mathcal{F}$ if 
for all adversary $\mathcal{A}$, there exists a simulator $\mathcal{S}$ that may depend on $\mathcal{A}$ 
such that 
\begin{equation}
\textrm{Exec}[\Pi, \mathcal{A}, \mathcal{Z}] \approx \textrm{Exec}[\mathcal{F}, \mathcal{S}, \mathcal{Z}].
\end{equation}
 When the output distributions are non-binary, the notation $\approx$ can be naturally extended to imply that the total variational distance between 
the two distributions is a negligible function of the security parameter $\lambda$. 
With this security definition, we only need to specify the ideal functionality to define any secure protocol.

\subsection{Two-party computation}

The syntax of the two-party computation (2PC) functionality is as follows \cite{Ishai2008founding}. 
Initially, Bob holds a private input $x_B$ and Alice holds a private input $x_A$. They aim to compute the output $(y_A, y_B) = F(x_A, x_B)$ without revealing their private inputs to each other.
Here, $y_A$ is the output for Alice and $y_B$ is the output for Bob.

There are three rounds in the 2PC functionality with the following syntax. 
\begin{enumerate}
\item Bob generates $(m_B, st) = \textsf{2PC}_1( x_B)$ and sends $m_B$ to Alice.
\item Alice sends $m_A = \textsf{2PC}_2(x_A, m_B)$ to Bob.
\item Bob computes $(y_A, y_B) = \textsf{2PC}_{out}(m_A, st)$ and sends $y_A$ to Alice. 
\end{enumerate}


There are both simulators for Alice and Bob.
Let us start from Alice's side. The simulator for Alice has the form  $Sim_A = ( Sim_A^{(1)}, Sim_A^{(2)} ) $.  
The simulator uses $Sim_A^{(1)}$ to generate $m_B$ and sends it to Alice.
On receiving $m_A$, the simulator  calculates $Sim_A^{(2)}(m_A) = x_A' $ and sends $x_A'$ to the ideal functionality $I[x_B](\cdot)$.
The ideal functionality calculates $y = F(x_A', x_B)$ and sends it to the simulator. The simulator then sends $y$ to Alice.
The random variable $IDEAL_A$ consists of Alice's view in the ideal world,
and the random variable $REAL_A$ consists of Alice's view in the real world. The security against malicious Alice holds if
\begin{equation}
IDEAL_A \approx REAL_A.
\end{equation}

The simulator for Bob has the form $Sim_B = ( Sim_B^{(1)}, Sim_B^{(2)}) $.  
On receiving $m_B$ from Bob, $Sim_B^{(1)}(m_B)$ generates $x_B'$ and sends it to the ideal functionality $I[x_A](\cdot)$.
The ideal functionality generates $y \leftarrow I[x_A](x_B')$ and gives it to the simulator.
On receiving $y$, $Sim_B^{(2)}(y)$ generates $m_A$ and gives it to Bob.
The random variable $IDEAL_B$ consists of Bob's view in the ideal world, and 
 the random variable $REAL_B$ consists of Bob's view in the real world. The security against malicious Bob holds if
\begin{equation}
IDEAL_B \approx REAL_B.
\end{equation}

\subsection{Quantum garbled circuit}

Here, we consider only the quantum garbled circuit for Clifford $+$ Measurement ($C+M$) circuits, which are circuits that only consist of Clifford gates and 
standard-basis measurements. 

The syntax of the quantum garbled circuit consists of a triad of functions ($\textsf{QGarble}$, $\textsf{QGEval}$, $\textsf{QGSim}$).
The $C+M$ quantum circuit is denoted as $Q$. The quantum input of the circuit is denoted as $x_{inp}$.
The $C+M$ quantum circuit consists of $d$ layers and initially contains $n_0: = n_1+ k_1$ qubits. In layer $i$ ($1\le i\le d$), first a Clifford is operated on the $n_{i-1}$ qubits 
and then the first $k_i$ qubits are measured. The number of remaining qubits is denoted as $n_i$ and satisfies $n_i = n_{i-1}-k_i$.

The three functions have the following syntax:
\begin{enumerate}
\item $(E_0, \tilde{Q}) \leftarrow \textsf{QGarble}(Q, \{ n_i, k_i\}, 1^\lambda)$.
Here, $\textsf{QGarble}$ is the function that turns a quantum circuit into its garbled version, $E_0$ is the unitary operation to be applied to the quantum input, $\lambda$ is the security parameter, and $\tilde{Q}$ is the garbled version of the quantum circuit $Q$.
\item  $\widetilde{x_{inp}} = E_0 (x_{inp}, 0^\lambda)$. Here, $0^\lambda$ is short for $\ket{0}^{\otimes \lambda}$, $x_{inp}$ contains $n_0$ qubits, $E_0$ is a unitary acting on $n_0+\lambda$ qubits, and $\widetilde{x_{inp}}$ is the encrypted input.
\item $x_{out} \leftarrow \textsf{QGEval}(\widetilde{x_{inp}}, \tilde{Q})$. Here, $\textsf{QGEval}$ is the evaluation function of the garbled circuit, and $x_{out}$ is the output after evaluating the garbled circuit $\tilde{Q}$ on the encrypted input $\widetilde{x_{inp}}$.
\item $(\widetilde{x_{inp}}, \tilde{Q}) \leftarrow \textsf{QGSim}(x_{out}, \{n_i, k_i\},1^\lambda) $. Here, $\textsf{QGSim}$ is the simulation function of the garbled circuit.
\end{enumerate}

\textbf{Correctness:} Essentially, the correctness holds when the output of the evaluation of the garbled circuit is identical to the output when running the original circuit directly on the original inputs.
This is equivalent to saying that
\begin{equation}
\{ \textsf{QGEval}(E_0(x_{inp} ), \tilde{Q}) \} \approx_s \{ Q(x_{inp}) \},
\end{equation}
where $\approx_s$ denotes statistical indistinguishability. 
Here, two density matrices 
$x_\lambda$ and $y_\lambda$ with the security parameter $\lambda$ are said to be statistically indistinguishable $x_\lambda \approx_s y_\lambda$ if
they satisfy
\begin{equation}
 || x_\lambda  - y_\lambda ||_1  \le \mu(\lambda),
\end{equation}
where $|| \cdot  ||_1$ is the trace distance and $\mu(\cdot)$ is a negligible function.

\textbf{Security:} Essentially, the security will be guaranteed if the garbled circuit and the garbled input can be reproduced by the output of the circuit, thus 
revealing no extra information on each party's private inputs. This is equivalent to saying that
\begin{equation}
 \{  (E_0(x_{inp}), \tilde{Q})   \} \approx_c \{ \textsf{QGSim}(Q(x_{inp}), \{n_i, k_i\}, 1^\lambda) \},
\end{equation}
where $\approx_c$ denotes computational indistinguishability. 
Here, two density matrices $x_\lambda$ and $y_\lambda$ with the security parameter $\lambda$ are said to be computionally indistinguishable $x_\lambda \approx_c y_\lambda$ if
for any probabilistic polynomial-time distinguisher $D_\lambda$ and any auxiliary quantum advice $d_\lambda$, the inequality
\begin{equation}
 | \textrm{Pr}[ D_\lambda(d_\lambda, x_\lambda) = 1 ] - \textrm{Pr}[ D_\lambda(d_\lambda, y_\lambda) = 1 ] |  \le \mu(\lambda)
\end{equation}
always holds for some negligible function $\mu(\cdot)$.  

The construction of a quantum garbled circuit first appears in Ref.~\cite{brakerski2022quantum}, and is then rigorously proved in Ref.~\cite{bartusek2021round}.

\subsection{Two-party quantum computation}

The syntax of two-party quantum computation (2PQC) is as follows. 
Initially, Bob holds a private quantum input $x_B$ and Alice holds a private quantum input $x_A$. 
They aim to compute the quantum output $(y_A, y_B) = F(x_A, x_B)$ without revealing their private quantum inputs to each other.

The 2PQC protocol consists of three functions $\textsf{2PQC}_1,\textsf{2PQC}_2,\textsf{2PQC}_{out}$ and three rounds as follows: 
\begin{enumerate}
\item  Bob generates $(m_B, st) = \textsf{2PQC}_1( x_B)$ and sends a quantum message $m_B$ to Alice. Here $st$ is a quantum state that Bob keeps for himself for later use.
\item Alice sends a quantum message $m_A = \textsf{2PQC}_2(x_A, m_B)$ to Bob.
\item  Bob computes $(y_A', y_B') = \textsf{2PQC}_{out}(m_A, st)$ and sends $y_A'$ to Alice. 
\end{enumerate}

\textbf{Correctness:} The correctness means that the output of the protocol should be identical to the output of
just computing the function $F$ on $x_A$ and $x_B$ directly. This is equivalent to saying that
\begin{equation}
 (y_A, y_B) \approx_s  (y_A', y_B').
\end{equation}

\textbf{Security:} The security consists of two parts, security against malicious Alice, and security against malicious Bob.
Let us start from Alice's side. The simulator for Alice has the form  $Sim_A = ( Sim_A^{(1)}, Sim_A^{(2)} ) $.  
The simulator uses $Sim_A^{(1)}$ to generate $m_B$ and sends it to Alice.
On receiving $m_A$, the simulator  calculates $Sim_A^{(2)}(m_A) = x_A' $ and sends $x_A'$ to the ideal functionality $I[x_B](\cdot)$.
The ideal functionality calculates $y_A' = F_A(x_A', x_B)$ and sends it to the simulator. 
The simulator then sends $y_A'$ to Alice.
Here $F_A$ is almost identical to $F$
except it only outputs the output for Alice.
The random variable $IDEAL_A$ consists of Alice's view, and let $REAL_A$ be the transcript 
of Alice in the real execution. The security is satisfied if
\begin{equation}
IDEAL_A \approx_c REAL_A.
\end{equation}

The simulator for Bob has the form $Sim_B = ( Sim_B^{(1)}, Sim_B^{(2)}) $.  
On receiving $m_B$ from Bob, $Sim_B^{(1)}(m_B)$ generates $x_B'$ and sends it to the ideal functionality $I[x_A](\cdot)$.
The ideal functionality generates $y \leftarrow I[x_A](x_B')$ and gives it to the simulator.
On receiving $y$, $Sim_B^{(2)}(y)$ generates $m_A$ and gives it to  Bob.
The random variable $IDEAL_B$ consists of Bob's view. Let the random variable $REAL_B$ be the view of Bob
in the real execution of the protocol. The security against malicious Bob holds if and only if
\begin{equation}
IDEAL_B \approx_c REAL_B.
\end{equation}

\section{Definition of QPFE}
\label{sec:def}

One of our first contributions is to formally define the task of QPFE and the security properties it should satisfy.
Let us start by describing its setup. In a QPFE, there are $n$ ($n\ge 2$) parties. Party 1 holds a private function $\mathcal{E}$ and a private quantum input $\rho_1$.
Party $i$ ($2\le i \le n$) holds a private quantum input $\rho_i$. Note that $\rho_1,\cdots, \rho_n$ can be entangled. 
An illustration of QPFE is shown in Fig.~\ref{fig:illlus}(a). After a few communication and computation rounds, party $i$ gets a quantum output $\sigma_i$.
Here, $\sigma_1, \cdots, \sigma_n$ can also be entangled.

\begin{figure}[htb]
\centering \includegraphics[width=8cm]{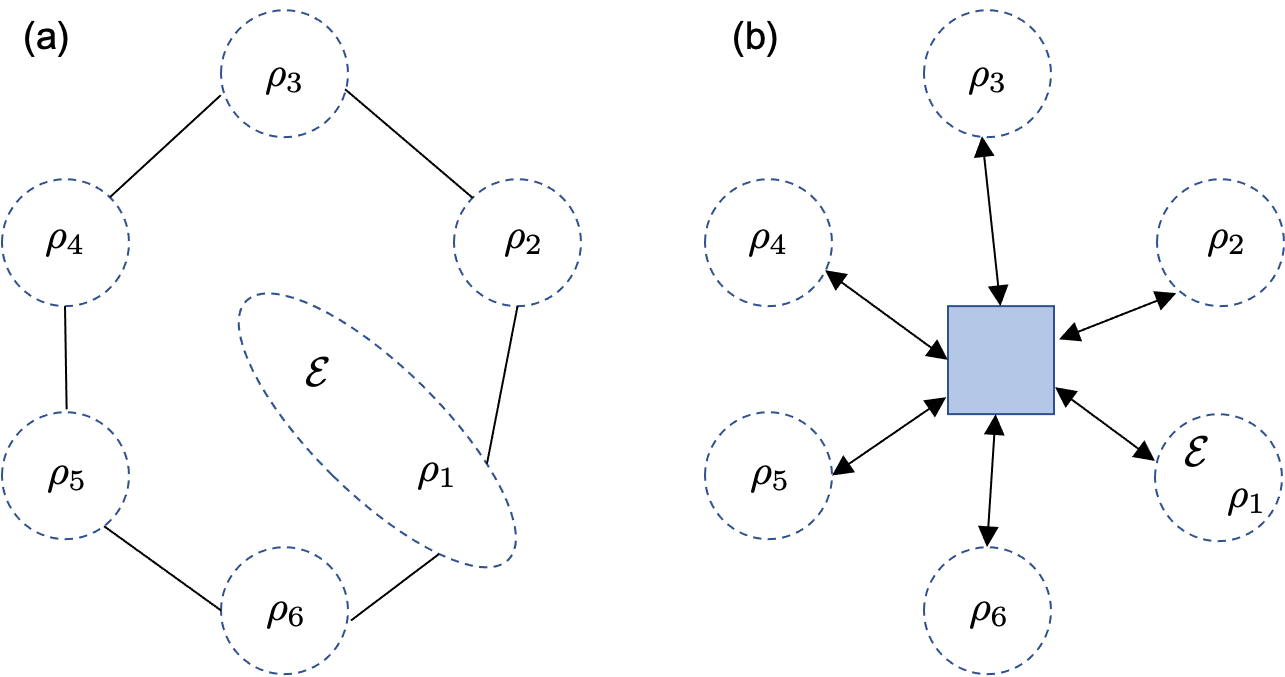}
\caption{(a) The real scenario where mutually untrusted parties communicate with each other. (b) The ideal scenario where each party interacts with a trusted third-party.}
\label{fig:illlus}
\end{figure}

In the introduction, we have informally described the security properties that a QPFE should satisfy: 
correctness and privacy. To give a formal definition, we adopt the simulation-based security framework.
For the convenience of the readers who are not familiar with this framework, we briefly review it here.
In this framework, there is a real world and an ideal world. The real world is where the protocol actually
takes place and there is an adversary that tries to attack the protocol. The ideal world, on the other hand, contains an ideal functionality that fulfills the task at 
hand (e.g., QPFE), and there is a simulator that interacts with the ideal functionality.
Now say that the ideal functionality fulfills the security requirements of correctness and privacy against any adversary by its construction, and there
is no way to tell the real world and the ideal world apart, then the adversary in the real world also cannot break the security of the protocol.

%
%
%

The above reasoning is quite high-level. We now give more details of this framework in the specific case
of QPFE. As explained previously, the ideal functionality $\mathcal{F}_{QPFE}$ should fulfill the conditions 
of correctness and privacy against any adversary. To model possible collusion between dishonest parties,
we assume there is a single adversary controlling all dishonest parties. In other words, we can consider 
the collection of all dishonest parties as the adversary. After elucidating the adversary, we are now ready 
to provide further details of the ideal functionality. First of all, the ideal functionality $\mathcal{F}_{QPFE}$
has the following input and output:
\begin{enumerate}
\item Input: The ideal functionality takes input from both honest parties and the adversary. Note that the adversary sends 
the inputs of dishonest parties on behalf of the dishonest parties. As said before, Party 1 provides inputs $\mathcal{E}$ and  $\rho_1$ to the ideal functionality,
and Party $i$ provides inputs $\rho_i$ to the ideal functionality where $2 \le i \le n$.
\item Output: The ideal functionality gives outputs to both honest parties and dishonest parties. As said before, 
 Party $i$ receives $\mathcal{E}^{(i)}(\rho_1, \cdots, \rho_n)$ from the ideal functionality.
\end{enumerate}

To satisfy the correctness and privacy properties, the ideal functionality works as follows.
\begin{enumerate}
\item After receiving the inputs from all parties, compute $\mathcal{E}(\rho_1,\cdots,\rho_n) =( \mathcal{E}^{(1)}(\rho_1,\cdots,\rho_n),\cdots, \mathcal{E}^{(n)}(\rho_1,\cdots,\rho_n) )$.
\item The ideal functionality receives a message ``abort'' or ``continue'' from the adversary. If ``abort'' is received, the process halts. 
If ``continue''  is received, the ideal functionality sends $\mathcal{E}^{(i)}(\rho_1,\cdots,\rho_n)$  to Party $i$.
\end{enumerate}

By the definition of this ideal functionality, it is clear that it always computes the outputs correctly and it does not leak any private 
information to any party. 
An illustration of this ideal functionality is shown in Fig.~\ref{fig:illlus}(b).
Note that this ideal functionality does not provide guaranteed output delivery, as the 
honest parties will not get any output if the adversary sends the message ``abort'' in the second step.  We can
modify the ideal functionality into an ideal functionality with guaranteed output delivery by removing the ability of the adversary to halt the process.

Using this ideal functionality, we are now ready to define the security property of QPFE. 
We say that a protocol $\Pi$ realizes QPFE if 
for all polynomial-time adversary $\mathcal{A}$, there exists a polynomial-time simulator $\mathcal{S}$ that may depend on $\mathcal{A}$ 
such that 
\begin{equation}
\textrm{Exec}[\Pi, \mathcal{A}, \mathcal{Z}] \approx_c \textrm{Exec}[\mathcal{F}_{QPFE}, \mathcal{S}, \mathcal{Z}].
\end{equation}
More precisely, 
let $\lambda$ be the security parameter.
If
for any probabilistic polynomial-time distinguisher $D_\lambda$ and any auxiliary quantum advice $d_\lambda$, the inequality
\begin{equation}
\begin{aligned}
& | \textrm{Pr}[ D_\lambda(d_\lambda, \textrm{Exec}[\Pi, \mathcal{A}, \mathcal{Z}]) = 1 ] -  \\
 & \quad \textrm{Pr}[ D_\lambda(d_\lambda, \textrm{Exec}[\mathcal{F}_{QPFE}, \mathcal{S}, \mathcal{Z}]) = 1 ] |  \le \mu(\lambda)
\end{aligned}
\end{equation}
always holds for some negligible function $\mu(\cdot)$,
we say the protocol $\Pi$ is a secure QPFE.

\subsection{Comparison with the classical analogue}

To understand the similarity and difference between QPFE and PFE, we compare their definitions in this section.
Recall that PFE has the following ideal functionality \cite{liu2022making}.  

Pre-agreement: $\#$ gates, $\#$ output wires, $\#$ input wires.

Input:  Party A: a circuit $C_f$ encoding the function $f$, its input $x_A$; Party B: its input $x_B$.

\begin{enumerate}
\item If either party sends abort to the ideal functionality $\mathcal{F}$, $\mathcal{F}$ broadcasts $\bot$ and halts.
\item If the $C_f$ received from $A$ satisfies the pre-agreement, the ideal functionality stores $C_f$.
\item Check if $x_A$ and $x_B$ are received and $C_f$ is stored. If not, the process halts.
The adversary $\mathcal{A}$ sends a message continue or abort to $\mathcal{F}$. 
 If $\mathcal{F}$ receives continue, it computes $C_f(x_A, x_B)$ and delivers $C_f(x_A, x_B)$ to all parties,
 otherwise the process halts.
\end{enumerate}

By comparing the ideal functionalities of PFE and QPFE, it is clear that these two tasks are
almost the same, except that the inputs and function/algorithm in QPFE are quantum, 
while the inputs and function/algorithm in PFE are classical.
Note however that the two tasks of QPFE and PFE are not directly comparable
in terms of their performance as they 
consider different objects (classical objects for PFE and quantum objects for QPFE).
While there might be quantum algorithms which are more advantageous than any classical algorithm, 
this is orthogonal to our work and outside the scope of our paper.

\subsection{Variants}

 There are several variants of QPFE definitions that can be considered. 
Before explaining these variants, note however that these discussions of variants are not used in the rest of the paper.
The reader may skip this section and directly go to the next section with no loss.
 
For the first variant, the private function might be distributed among several parties instead of just one.
For this case, the ideal functionality should be modified as follows: 
it should first collect part of the private function $\mathcal{E}^i$ from party $i$ for all $1\le i\le n$. Then
the ideal functionality use a combining mechanism $\textsf{Combine}(\cdot)$ to yield the complete quantum function
\begin{equation}
\mathcal{E}= \textsf{Combine}(\mathcal{E}^1, \cdots, \mathcal{E}^n).
\end{equation}
The rest procedures of the ideal functionality is the same as the standard ideal functionality of QPFE.

For the second variant, some participants may go offline during the process. A good QPFE protocol should ideally address the security issue in this situation.
For this case, the ideal functionality should be modified as follows:
Instead of sending Party $i$ the result $\mathcal{E}^{(i)}(\rho_1,\cdots,\rho_n)$ directly,
the ideal functionality first waits for a message from  Party $i$ which can either be ``online''
or ``offline''. Only if the message ``online'' is received, the ideal functionality 
sends $\mathcal{E}^{(i)}(\rho_1,\cdots,\rho_n)$ to Party $i$.

For the third variant, both the function and the inputs in QPFE might be time-dependent instead of invariant with respect to time.
For this case, the ideal functionality should be modified as follows:
First of all, the inputs from all parties are sent to the ideal functionality simultaneously 
and instantly (or within a short time interval $\Delta_1$). 
Second, the ideal functionality performs the calculation instantly (or within a short time interval $\Delta_2$).
Third, the ideal functionality sends the outputs to the parties simultaneously and instantly (or within a 
short time interval $\Delta_3$). The most important difference from the standard definition is that
 every step is timestamped in this scenario.

\section{Scheme of QPFE}
\label{sec:protocol}

In this section, we present two schemes of QPFE. In Sec.~\ref{sec:firstprotocol}, we present 
the first scheme of QPFE, which utilizes MPQC as a black box. In Sec.~\ref{sec:secondprotocol},
we present the second scheme of QPFE, which builds directly on smaller primitives. 
In Sec.~\ref{sec:reuse}, we extend the schemes to the reusable scenario, where multiple runs 
of QPFE are executed. We show that the average communication cost can be lowered 
in this multiple-run scenario.

\subsection{First protocol}
\label{sec:firstprotocol}

In this section, we present the first QPFE scheme. In Sec.~\ref{sec:1protodesc}, we describe the procedures 
of the first QPFE scheme.  In Sec.~\ref{sec:1protoimpl}, we give the implementation
of the first QPFE scheme on a quantum computer.  In Sec.~\ref{sec:1protoeg}, we give an example 
of the first QPFE scheme for specific quantum functions and inputs. 

\subsubsection{Protocol description}
\label{sec:1protodesc}
The critical idea is to convert the quantum function $\mathcal{E}$ into a quantum state $\rho_\mathcal{E}$. Then 
we can apply MPQC to obtain a secure protocol of QPFE. This can indeed be done by the Choi-Jamiolkowski (C-J) transform \cite{jamiolkowski1972linear,choi1975completely}.

Let us review the C-J transform. To describe the C-J transform, we need to first define the Choi matrix $\Upsilon_\mathcal{E}$ corresponding to a quantum channel $\mathcal{E}$, which has the form
\begin{equation}
\label{eq:Choi}
\Upsilon_\mathcal{E} = ( I \otimes \mathcal{E}  )  (\ket{\psi} \bra{\psi} ) =  \sum\limits_{k,l =1 }^d  \ket{k}\bra{l} \otimes \mathcal{E} (\ket{k}\bra{l}),
\end{equation}
where $\ket{\psi}=\sum_k \ket{kk} $ is an unnormalized Bell state and $\Upsilon_\mathcal{E}$ is positive definite as $\mathcal{E}$ is a completely positive map.
Note that $\Upsilon_\mathcal{E}$ is not necessarily a quantum state as its trace may not be 1. 
We are now ready to present the C-J transform. It can be expressed as
\begin{equation}
 \mathcal{E}(\rho) = \textrm{Tr}_{A} [ ( \rho^T_A  \otimes \mathbbm{1}_B  ) \Upsilon_\mathcal{E}  ],
\end{equation}
where $\rho^T_A \in \mathcal{H}_A$,  $ \mathbbm{1}_B \in \mathcal{H}_B$,  $\Upsilon_\mathcal{E} \in  \mathcal{H}_A \otimes   \mathcal{H}_B$, $ \mathcal{H}_A$ and $ \mathcal{H}_B$ are the Hilbert space over the systems $A$ and $B$ respectively, and $ \textrm{Tr}_{A}$ is the partial trace over the $A$ system.
The C-J transform establishes an equivalence between the channel $\mathcal{E}$ and its corresponding Choi matrix $\Upsilon_\mathcal{E}$.

Based on the C-J transform, we can define the following $G$ functionality 
\begin{equation}
 G(\rho_1,\rho_2, \rho_3) =  \textrm{Tr}_{A,B} [ ((  \rho_1 \otimes \rho_3 ) \otimes \mathbbm{1}_C) \rho_2 ],
\end{equation}
where $\rho_1 \in \mathcal{H}_A$,  $\rho_3 \in \mathcal{H}_B$, $\mathbbm{1}_C \in  \mathcal{H}_C$,  and $\rho_2 \in \mathcal{H}_{ABC}$.
To utilizes the $G$ functionality, party 1 prepares his inputs as $(\rho_1, \rho_2) = (\rho_A^T, \Upsilon_\mathcal{E}/ tr(\Upsilon_\mathcal{E}))$ and party 2 prepares his input as $\rho_3 = \rho_B ^ T$.  The two parties then perform an MPQC with the $G$ functionality on their private inputs. According to the property of MPQC, the function $\mathcal{E}$ of party 1 is kept secret, 
as well as the private quantum states of the two parties.

Note that since the transpose operation ($\rho \to \rho^T$) is not completely positive, it cannot be realized as a quantum map. Hence, we assume 
the classical description of $\rho$ is known.  Compared with the protocol that utilizes classical PFE solely, the cost of the online 
phase of this protocol becomes polynomial instead of exponential. This is because the preparation of $\rho_A^T$ and $\rho_B^T$ involves no communication and
can all be done offline by the parties themselves.

For ease of notation, let us define $\rho = \rho_A \otimes \rho_B$. For proving the correctness of the protocol, let us show that $G(\rho_1,\rho_2, \rho_3) =  \mathcal{E}(\rho)/\textrm{Tr}(\Upsilon_\mathcal{E})$ as follows,
\begin{equation}
\begin{aligned}
G(\rho_1,\rho_2, \rho_3)  & = \textrm{Tr}_{A,B} [((  \rho_1 \otimes \rho_3 ) \otimes \mathbbm{1}) \rho_2] \\
& = \textrm{Tr}_{A,B}[ ((  \rho_A^T \otimes \rho_B^T ) \otimes \mathbbm{1}) \Upsilon_\mathcal{E}/\textrm{Tr}(\Upsilon_\mathcal{E}) ] \\
& = \textrm{Tr}_{A,B} [ ((  \rho_A \otimes \rho_B )^T \otimes \mathbbm{1}) \Upsilon_\mathcal{E}/\textrm{Tr}(\Upsilon_\mathcal{E}) ] \\
& = \textrm{Tr}_{A,B} [(\rho^T \otimes \mathbbm{1}) \Upsilon_\mathcal{E} /\textrm{Tr}(\Upsilon_\mathcal{E})] \\
& =  \mathcal{E}(\rho)/ \textrm{Tr}(\Upsilon_\mathcal{E}).
\end{aligned}
\end{equation}
In the second equality, we have utilized the identity $(A \otimes B) ^T = A^T \otimes B^T $. 

This protocol can be easily extended to the multi-party case. 
We only need to change the $G$ functionality to
\begin{equation}
\begin{aligned}
 G(\rho_1,\rho_2, \cdots, \rho_{n+1})  = & \textrm{Tr}_{A,B} [ ((  \rho_1 \otimes \rho_3\otimes \rho_4 \otimes  \\
 & \cdots \otimes  \rho_{n+1} ) \otimes \mathbbm{1}) \rho_2 ].
 \end{aligned}
\end{equation}
Here, $\rho_i$ ($3\le i \le n+1$) belongs to party $i$. $\rho_1$ and $\rho_2$ belong to party 1.

\subsubsection{Implementation on a quantum computer}
\label{sec:1protoimpl}

Now, we consider how to implement the functionality $G$.
First we make a decomposition on the state $\rho = \rho_1 \otimes \rho_3$.
It can be decomposed as 
\begin{equation}
\rho = \sum p_i \ket{\psi_i} \bra{\psi_i},
\end{equation}
which can be viewed as a part of a positive operator-valued measure (POVM). 
The reason is as follows.
For each  $i$, we can expand $ \ket{\psi_i}$ into a basis of $\mathcal{H}_{AB}$ as $\{ \ket{\psi_i},  \ket{\psi_i^2},  \cdots \ket{\psi_i^d} \}$, 
where $d$ is the dimension of $\mathcal{H}_{AB}$. By the definition of a basis, we have 
\begin{equation}
 \ket{\psi_i}\bra{\psi_i} +\sum_{j=2}^d \ket{\psi_i^j} \bra{\psi_i^j}= I.
\end{equation}
Hence, 
\begin{equation}
I-\rho = \sum_i p_i \sum_{j=2}^d \ket{\psi_i^j} \bra{\psi_i^j}.
\end{equation}
This expression is positive definite as each summand is positive definite. Hence, $\{ \rho , I-\rho\}$ forms a POVM.

By preparing $\rho_2$ and  measuring the subsystem $\mathcal{H}_{AB}$ with the POVM  $\{   (I_{AB}-\rho ), \rho   \} $,
which corresponds to the outcomes 0 and 1 respectively. 
We post-select the outcome of the remaining quantum system when we get the measurement result $\rho \otimes I_C$.
Then we  obtain the quantum state $\mathcal{E}(\rho) $.
The procedure is illustrated in Fig.~\ref{fig:schematics}.
\begin{figure}[htb]
\centering 
\includegraphics[width=8cm]{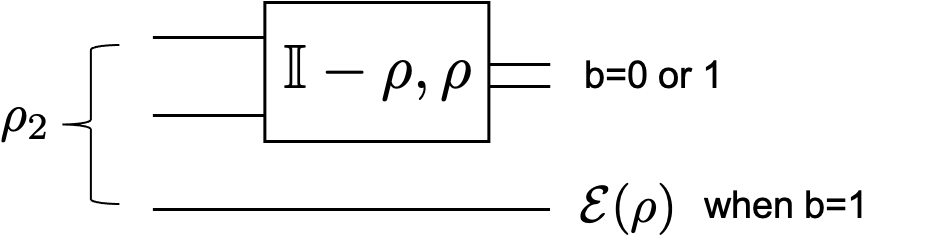}
\caption{An illustration of the implementation of the first QPFE scheme.}
\label{fig:schematics}
\end{figure}
Note that this procedure is probabilistic. The success probability is given by  $1 / \textrm{Tr}(\Upsilon_\mathcal{E}) $.

\subsubsection{Example}
\label{sec:1protoeg}

Let us now give an example of the first QPFE scheme.
In the example, we consider the depolarizing channel, which has the form 
\begin{equation}
\label{eq:depolarizing}
\mathcal{E}(\rho) = p \rho + (1-p) \frac{\mathbbm{1}}{2}.
\end{equation}
Here party $A$ does not have any input, party $B$ has input $\rho_B = \rho$. 
We decompose $\rho$ as 
\begin{equation}
\rho = \sum p_i \ket{\psi_i} \bra{\psi_i}.
\end{equation}
Apparently, $(\rho, \mathbbm{1}- \rho )$ form a complete set of POVM when $\rho$ is a qubit.
This is because we can  write $ \mathbbm{1} - \rho$ as 
\begin{equation}
\mathbbm{1} -  \rho = \sum p_i \ket{\psi_i^\bot} \bra{\psi_i^\bot}.
\end{equation}
This matrix is a mixture of pure states. In addition, it has trace 1 as all $\ket{\psi_i^\bot} \bra{\psi_i^\bot}$ have trace 1 and 
$\sum p_i =1$.
This phenomenon can easily be extended to $n$ qubits.
The difference is that, while 1 qubit corresponds to 2 bases, $n$ qubits correspond to $2^n$ bases. 
For concreteness, let us take 
\begin{equation}
\label{eq:rhoeg}
\rho = \ket{+}\bra{+}.
\end{equation}
We have the following claim:
\begin{theorem}
For $\rho$ defined in Eq.~\eqref{eq:rhoeg} and $\mathcal{E}$ defined in Eq.~\eqref{eq:depolarizing}, the identity 
\begin{equation}
\textrm{Tr}(\Upsilon_\mathcal{E}) G( \rho^T, \emptyset, \Upsilon_\mathcal{E}/\textrm{Tr}(\Upsilon_\mathcal{E}))  = \mathcal{E}(\rho)
\label{eq:example}
\end{equation}
holds.
\end{theorem}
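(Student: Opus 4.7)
The plan is to specialize the general identity $G(\rho_1,\rho_2,\rho_3) = \mathcal{E}(\rho)/\textrm{Tr}(\Upsilon_\mathcal{E})$ already established in the displayed computation just above the theorem to the concrete setting at hand. In the example, Party $A$ supplies no quantum input, so the $\rho_A$ slot in the general protocol is trivial (its Hilbert space is one-dimensional), while Party $B$ supplies $\rho_B = \rho = \ket{+}\bra{+}$. Substituting directly gives
\[
G\bigl(\rho^T,\,\emptyset,\,\Upsilon_\mathcal{E}/\textrm{Tr}(\Upsilon_\mathcal{E})\bigr) \;=\; \frac{\mathcal{E}(\rho)}{\textrm{Tr}(\Upsilon_\mathcal{E})},
\]
and multiplying through by $\textrm{Tr}(\Upsilon_\mathcal{E})$ yields the claim.

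First I would record the simplifying observation that $\rho = \ket{+}\bra{+}$ is real in the computational basis, so $\rho^T = \rho$; this removes any ambiguity about the transpose. Next I would reapply the chain of equalities from the general correctness calculation, checking that it still goes through when the $\rho_A$ factor is replaced by the scalar $1$: the identity $(A\otimes B)^T = A^T \otimes B^T$ and the partial-trace manipulation over $\mathcal{H}_{AB}$ both degenerate benignly to the single-factor version with only a $\mathcal{H}_B$ trace remaining.

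For a self-contained verification that does not lean on the general identity, I would compute the Choi matrix $\Upsilon_\mathcal{E}$ of the qubit depolarizing channel directly from Eq.~\eqref{eq:Choi}, noting that $\textrm{Tr}(\Upsilon_\mathcal{E}) = 2$ since $\mathcal{E}$ is trace-preserving, then form $(\rho^T \otimes \mathbbm{1})\,\Upsilon_\mathcal{E}$ and take the partial trace over the first qubit. The result should match $p\ket{+}\bra{+} + (1-p)\mathbbm{1}/2$ exactly, because the action of $\mathcal{E}$ on the off-diagonal terms contributes the $p\ket{+}\bra{+}$ part (coming from $\mathcal{E}(\ket{0}\bra{1})=p\ket{0}\bra{1}$ and its adjoint), while the diagonal terms contribute the depolarized $(1-p)\mathbbm{1}/2$ part.

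I do not anticipate any serious obstacle; the only thing to watch is bookkeeping. In particular, one must keep track of which subsystem is traced out, apply the convention $\ket{\psi}=\sum_k\ket{kk}$ consistently so that the normalization factor $1/\textrm{Tr}(\Upsilon_\mathcal{E})$ appears in exactly one place, and treat the empty Party $A$ input as a trivial tensor factor rather than as literally deleting a slot from $G$.
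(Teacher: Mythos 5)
Your proposal is correct, but your primary route is genuinely different from the paper's. The paper proves the theorem by a fully explicit computation for this specific channel and state: it derives $\Upsilon_\mathcal{E} = 2p\ket{\Phi^+}\bra{\Phi^+} + 2(1-p)\tfrac{\mathbb{I}}{4}$ from the action of $\mathcal{E}$ on the matrix units, then evaluates $\textrm{Tr}_1\bigl((\rho^T\otimes\mathbb{I})\,\Upsilon_\mathcal{E}\bigr)$ term by term (the Bell-state term giving $p\ket{+}\bra{+}$, the maximally mixed term giving $(1-p)\tfrac{\mathbb{I}}{2}$), and matches the result against $\mathcal{E}(\rho)$. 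You instead obtain the identity as an immediate specialization of the general correctness calculation $G(\rho_1,\rho_2,\rho_3)=\mathcal{E}(\rho)/\textrm{Tr}(\Upsilon_\mathcal{E})$ already established in Sec.~\ref{sec:1protodesc}, treating party $A$'s empty input as a trivial tensor factor and noting $\rho^T=\rho$; this is shorter and logically sufficient (and your handling of the argument slots is the right reading of $G(\rho^T,\emptyset,\cdot)$), though it makes the ``example'' less of an independent check --- the paper's explicit computation is exactly what verifies the general formula in a concrete instance, and your backup direct-computation route (explicit Choi matrix, $\textrm{Tr}(\Upsilon_\mathcal{E})=2$, partial trace over the first subsystem) essentially coincides with the paper's proof, differing only in that you organize the trace over matrix units rather than over the Bell-plus-identity decomposition. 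One small bookkeeping slip in your sketch: the off-diagonal matrix units contribute $\tfrac{p}{2}\sigma_x$, not the full $p\ket{+}\bra{+}$, and the diagonal terms contribute $\tfrac{\mathbb{I}}{2}$ rather than $(1-p)\tfrac{\mathbb{I}}{2}$; the missing $\tfrac{p}{2}\mathbb{I}$ migrates between the two groups, and the total is of course the claimed $p\ket{+}\bra{+}+(1-p)\tfrac{\mathbb{I}}{2}$, so this does not affect the validity of your argument.
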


\begin{proof}
First, the right-hand side (RHS) of Eq.~\eqref{eq:example} equals 
\begin{equation}
p \ket{+}\bra{+} + (1-p) \frac{\mathbb{I}}{2}.
\end{equation}
We are left to show that the left-hand side (LHS) also equals this value.

Before calculating the LHS, we first compute $\Upsilon_\mathcal{E}$ explicitly.
By Eq.~\eqref{eq:depolarizing} and Eq.~\eqref{eq:Choi},
we have 
\begin{equation}
\Upsilon_\mathcal{E} = 2 p \ket{\Phi^+}\bra{\Phi^+} + 2 (1-p) ( \frac{\mathbb{I}}{4}).
\end{equation}
More precisely, the depolarizing channel $\mathcal{E}$ satisfies
\begin{equation}
\mathcal{E}(I)=I, \mathcal{E}(\sigma_x)=p \sigma_x, \mathcal{E}(\sigma_y) = p \sigma_y,  \mathcal{E}(\sigma_z) = p \sigma_z.
\end{equation}
Hence
\begin{equation}
\begin{aligned}
\mathcal{E}(\ket{0}\bra{0}) = (1-p) I/2 + p/2 \sigma_z,  \quad  &  \mathcal{E}(\ket{0}\bra{1}) = p \ket{0}\bra{1} ,  \\
  \mathcal{E}(\ket{1}\bra{1}) = (1-p) I/2 - p/2 \sigma_z, \quad    &   \mathcal{E}(\ket{1}\bra{0}) = p \ket{1}\bra{0}.
\end{aligned}
\end{equation}
Substituting these into 
\begin{equation}
\Upsilon_\mathcal{E}  = \sum\limits_{k,l} \mathcal{E}(\ket{k}\bra{l}) \ket{k}\bra{l},
\end{equation}
we get 
\begin{equation}
\Upsilon_\mathcal{E} = 2 p \ket{\Phi^+}\bra{\Phi^+} + 2 (1-p) ( \frac{\mathbb{I}}{4}).
\end{equation}

With the expression of $\Upsilon_\mathcal{E}$ at hand, we are now ready to calculate the LHS of  Eq.~\eqref{eq:example}. It equals 
\begin{equation}
\begin{aligned}
  & \textrm{Tr}_1 ( (\rho^T \otimes \mathbb{I})  \Upsilon_\mathcal{E})  \\
=&\textrm{Tr}_1 ( (\rho^T \otimes \mathbb{I})  2p\ket{\Phi^+}\bra{\Phi^+}) +  \textrm{Tr}_1 ( (\rho^T \otimes \mathbb{I}) 2 (1-p) ( \frac{\mathbb{I}}{4}))   \\
  =& 2p \textrm{Tr}_1 ( (\rho^T \otimes \mathbb{I})  \ket{\Phi^+}\bra{\Phi^+})  + 2 (1-p)  \textrm{Tr}_1 ( (\rho^T \otimes \mathbb{I})  ( \frac{\mathbb{I}}{4})).  \\
\end{aligned}
\end{equation}
The first term amounts to measure $\ket{+}$ on the first qubit of the Bell state $ \ket{\Phi^+}$ and then to take the second qubit.
Apparently, the second qubit should be $\ket{+}$ and the probability that the first qubit has measurement outcome $\ket{+}$ is $1/2$, 
so the first term becomes $ p \ket{+}\bra{+}$. For the second term, it amounts to measure $ \ket{+}$ on the first qubit of $I/4$ 
and then take the second qubit. The second qubit should be $I/2$ and the probability that the first qubit has measurement outcome $\ket{+}$ is $1/2$ too,
so the first term becomes $ (1-p) I/2$. Overall, the LHS of Eq.~\eqref{eq:example} becomes 
\begin{equation}
p \ket{+}\bra{+} + (1-p) \frac{\mathbb{I}}{2},
\end{equation}
which equals the RHS of Eq.~\eqref{eq:example}.
\end{proof}

%

\subsection{Second protocol}
\label{sec:secondprotocol}

Without loss of generality, we restrict the private circuit of Bob to be a Clifford + Measurement (C+M) circuit  $Q$ acting 
on two parties' private inputs and some ancillary zero states, $\ket{0}$, and T states, $(\ket{0}+e^{i\pi/4} \ket{1})/\sqrt{2}$.

We first describe the classical functionality used in the second protocol.
Before the classical functionality, Alice randomly selects two Cliffords $C_{A,in}$ and $C_{A,out}$ and  Bob randomly selects a Clifford $C_{B,in}$.
Let $Q_B=Q[C_{A,out}]$ be the circuit that performs $C_{A,out}$ on Alice's output after applying the circuit $Q$.
Let $(E_0, \widetilde{Q_B})$ be the garbled circuit of $Q_B$.
Bob and Alice calculate the garbled circuit through a classical 2PC with syntax ($\textsf{2PC}_1, \textsf{2PC}_2$), as detailed in Protocol \ref{Fig:Procedure}.
Here $C_{A,in}, C_{A,out}$ are only known to Alice, and $Q, C_{B,in}$ are only known to Bob. 

\begin{algorithm}
\caption{\textsc{Classical functionality}}
\begin{flushleft}
\textit{Alice Input}: $C_{A,in}, C_{A,out}$.  

\textit{Bob Input}: $C_{B,in}$, $Q$. 
\end{flushleft}
\begin{algorithmic}[1]
\STATE
 Sample $(E_0, \widetilde{Q_B}) = \textsf{QGarble}(Q[C_{A,out}])$
\STATE
Let $W= E_0 (I \otimes C_{B,in}^{-1} \otimes I) C_{A,in}^{-1}$. 
\end{algorithmic}
\begin{flushleft}
\textit{Bob Output}: (1) $W$; (2) $\widetilde{Q_B}$.
\end{flushleft}
\label{Fig:Procedure}
\end{algorithm}

We are now ready to present the second protocol. 
Let the quantum input of Alice be $x_A$ and that of Bob be $x_B$.
Denote $k$ zero states as $0^k$ and $k$ T states as $T^k$.
The complete protocol has three rounds which are as follows:
\begin{enumerate}
\item Bob sends Alice the messages $\mathbf{m_{B,1}}= C_{B,in}(x_B,0^\lambda)$ and $m_{B,1}=\textsf{2PC}_1(C_{B,in},Q)$, where $\lambda$ is the security parameter and $C_{B,in}$ is a unitary that acts on the joint quantum state of $x_B$ and $0^\lambda$.
\item Alice sends Bob the messages $\mathbf{m_A} = C_{A,in}( x_A, \mathbf{m_{B,1}}, T^k, 0^{k} )$ and $m_A=\textsf{2PC}_2(m_{B,1}, C_{A,in},C_{A,out})$, where $k$ is the number of ancilliary $T$ states.
\item Bob first gets $(W, \widetilde{Q_B})=\textsf{2PC}_{out}(m_A)$ with the first two rounds. 
Bob then calculates 
\begin{equation}
 \begin{aligned}
 W(\mathbf{m_A} ) &= E_0 (\mathbb{I}\otimes C^{-1}_{B,in} \otimes \mathbb{I}) C^{-1}_{A,in} (\mathbf{m_A}) \\
 &= E_0(x_A, x_B, T^k, 0^{k+\lambda}).
 \end{aligned}
\end{equation}
Together with $\widetilde{Q_B}$, Bob gets $(C_{A,out}(y_A, 0^\lambda),y_B)$.
Let $\mathbf{m_{B,2}}= C_{A,out}(y_A, 0^\lambda)$. Bob sends $\mathbf{m_{B,2}}$ to Alice. Alice then recovers $y_A$ from $\mathbf{m_{B,2}}$. 
\end{enumerate}

Note an important change compared to 2PQC is that in the classical functionality of QPFE, $Q$ is Bob's private information, while in that of 2PQC, $Q$ is common information 
of both parties.

\subsection{Reuseable extensions}
\label{sec:reuse}

In this section, we consider the scenario where multiple runs of QPFE are executed. 
In this case, Bob's first message can be reused.
This reduces subsequent executions to two rounds and reduces the average communication and computational consumption.
In the initial executions, either Bob sends multiple quantum messages $\mathbf{m}$ for use in later executions, or sends only classical messages, which
can be easily copied for use in  later executions.

\section{Security proofs of QPFE}
\label{sec:securityproof}

\subsection{Security proof of the first scheme}

Since party 1 can change his private function $\mathcal{E}$ to $\mathcal{E}'(\rho_A, \rho_B) =\rho_B$ 
which completely undermines party 2's security, we only consider security against malicious party 2 and security against semi-honest party 1.

Let $I_{2PQC}[G, \rho_A, \mathcal{E}, \rho_B]$ denote the ideal functionality  of 2PQC for the function $G(\rho_1,\rho_2, \rho_3)$, where the inputs are $(\rho_1, \rho_2, \rho_3) 
= (\rho_A^T, \Upsilon_\mathcal{E}/ tr(\Upsilon_\mathcal{E}), \rho_B ^ T)$. In the following proof, we will write $I[\rho_A, \mathcal{E}](\cdot) = I_{2PQC}(G, \rho_A, \mathcal{E}, \cdot)$ 
and $I[\rho_B](\cdot, \cdot) = I_{2PQC}(G, \cdot, \cdot, \rho_B)$.

\begin{proof}[Security Proof]

First, we give the simulator that interacts with a malicious party 2.
The simulator gives $\rho_B$ to the ideal functionality $I[\rho_A, \mathcal{E}](\cdot)$ and receives $\tau_B$. 
The simulator then sends  $\tau_B$ to party 2.
Since the simulator is ignorant of party 1's private inputs, the simulator is secure. 
We denote the real execution as $\mathcal{H}_0$ and the simulated execution as $\mathcal{H}_1$.

$\mathcal{H}_0 \approx_c \mathcal{H}_1$: This results from the security of 2PQC.

Next, we give the simulator that interacts with a semi-honest party 1.
The simulator gives $\rho_A,\mathcal{E}$ to the ideal functionality $I[\rho_B](\cdot, \cdot)$ and receives $\mathcal{E}(\rho_A, \rho_B)$.
Since party 1 is semi-honest, $\mathcal{E}(\rho_A, \rho_B)$ is indeed the correct output. The adversary has no extra information
in this simulated world other than the circuit output and its inputs.
We are left to show the real world is indistinguishable from the ideal world. This again results from the security of 2PQC. 
\end{proof}

\subsection{Security proof of the second scheme}

\textbf{Security against malicious Alice:} 
Let the simulator of the classical 2PC be $(Sim_A^{(1)}, Sim_A^{(2)})$.
Consider the following simulator Sim for the malicious Alice:
\begin{enumerate}
\item The simulator samples $C_{B,in}$ randomly and computes $\mathbf{m}_{B,1} = C_{B,in}(0^{n_B}, 0^\lambda)$ and $m_{B,1}=Sim_A^{(1)}(1^\lambda)$. The simulator sends Alice $(\mathbf{m}_{B,1},m_{B,1})$. 
\item Alice sends Sim $( \mathbf{m_A},m_A)$.  Sim then calculates $ (C_{A,in}, C_{A,out},E_0)  \leftarrow 2PC.Sim_A^{(2)} ( m_A)$. 
\item By $(C_{A,in}, C_{A,out},C_{B,in})$, the simulator computes
\begin{equation}
 ( x_A', x_B') := U_{dec} (\mathbf{m_A}),
\end{equation}
where $ U_{dec} $ is $(\mathbb{I}\otimes C^{-1}_{B,in} \otimes \mathbb{I}) C^{-1}_{A,in}$.
\item Sim sends $x_A'$ to $I[x_B, Q]( \cdot)$ (Here $Q$ is the description of the circuit known only to Bob), gets $y_A$ back, computes $\widehat{y_A}= C_{A,out}(y_A, 0^\lambda)$, 
sends $\widehat{y_A}$ to Alice and outputs whatever Alice outputs.
\end{enumerate}

The simulator is secure as it is completely ignorant of Bob's private inputs.
We now define a series of hybrid executions:
\begin{itemize}
\item  $\mathcal{H}_0$: The real execution.
\item   $\mathcal{H}_1$: This hybrid is the same as the real execution, except that $m_{B,1}$ is obtained through $2PC.Sim_A^{(1)}$ and Bob gets $(C_{A,in},C_{A,out},E_0)$ through $2PC.Sim_A^{(2)}(m_2)$. Bob performs the garble procedure using this information and continues as in the real execution. 
\item   $\mathcal{H}_2$: This hybrid is the same as $\mathcal{H}_1$ except that, instead of the garbling procedure, Bob directly evaluates $Q[C_{A,out}]$ on the input $(x_A, x_B)$.
\item  $\mathcal{H}_3$:  This hybrid is the same as $\mathcal{H}_2$ except that  $\mathbf{m_1}= C_{B,in}(x_B,0^\lambda)$ is replaced by $\mathbf{m_1}= C_{B,in}(0^{n_B},0^\lambda)$.
\item  $\mathcal{H}_4$: This hybrid is the same as $\mathcal{H}_3$ except that instead of evaluating the circuit directly, send $x_A'$ to $I[x_B, Q]( \cdot)$  to receive $y_A$ and then compute $\widehat{y_A}= C_{A,out}(y_A, 0^\lambda)$. This is exactly the simulated execution.
\end{itemize}

We now show the equivalences between these hybrids: 
\begin{itemize}
\item $\mathcal{H}_0 \approx_c \mathcal{H}_1$: This results from the security of 2PC.
\item $\mathcal{H}_1 \approx_s \mathcal{H}_2$: This results from the correctness of the quantum garbled circuit.
\item $\mathcal{H}_2 \approx_s \mathcal{H}_3$: This results from the security of the Clifford authentication code.
\item  $\mathcal{H}_3 \approx_s \mathcal{H}_4$:  
We have that computing $Q[C_{A,out}]$ on $(x_A, x_B,  T^k, 0^{k+\lambda})$ is 
statistically indifferent from the ideal functionality, namely computing first $(y_A, y_B) = Q(x_A, x_B,  T^k, 0^{k+\lambda})$ and then $\widehat{y_A}= C_{A,out}(y_A, 0^\lambda)$. Hence the two hybrids are equivalent.
\end{itemize}

\textbf{Security against malicious Bob:}  Next we prove security against malicious Bob.
Note that in this case, we require that Bob has no output, otherwise Bob can simply change the function such that $Q'(x_A,x_B)=x_A$ to obtain extra information on Alice's private inputs 
in addition to $Q(x_A, x_B)$. Note that Bob cannot copy Alice's output before sending Alice's output to Alice because of the no-cloning theorem.
Denote the simulator for classical 2PC as $(Sim_B^{(1)}, Sim_B^{(2)})$.

Consider the following simulator Sim for Bob.

\begin{enumerate}[leftmargin=*]
 \item  First round
       \begin{itemize}[leftmargin=*]
         \item[-] Get $(\mathbf{m}_{B,1},m_{B,1})$ from Bob. Compute $(C_B,Q) = 2PC.Sim_B^{(1)}(m_{B,1})$.
     \end{itemize}
 \item  Second round
      \begin{itemize}[leftmargin=*]
         \item[-] Sample a random Clifford  $C_{A,out}\leftarrow \mathcal{C}$ and compute $\widetilde{y_A}=C_{A,out}(0^{m_A})$.
         \item[-] Compute $(m_{inp}, \widetilde{Q_B}) \leftarrow \textsf{QGSim}(\widetilde{y_A}, \{n_i, k_i\}, 1^\lambda)$.
         \item[-] Sample a random Clifford $U_{dce} \leftarrow \mathcal{C}$. Here, $U_{dce}$ corresponds to $W$ in the protocol.
         \item[-] Compute $\mathbf{m}_{A,2} = U_{dce}^\dagger (m_{inp}, 0^k, T^k)$.
         \item[-] Compute $m_{A,2} = 2PC.Sim^{(2)}_B (U_{dce}, \widetilde{Q_B}) $.
         \item[-] Send $(\mathbf{m}_{A,2}, m_{A,2})$ to Bob.
     \end{itemize}
 \item  Third round
      \begin{itemize}[leftmargin=*]
         \item[-] Get $\widehat{y_A}$ from Bob and check $C_{A,out}^{\dagger}(\widehat{y_A})$.
         \item[-] If the last bits of $C_{A,out}^{\dagger}(\widehat{y_A})$ are not all zero, abort. Otherwise send ok to the ideal functionality.
         \item[-] Output $\widehat{y_A}$.
     \end{itemize}
\end{enumerate}

The simulator does not know Alice's real input $x_A$ and real output $y_A$.
Next we show Bob's view in this ideal world is indistinguishable from his view in the real world. 

We define a series of hybrids as follows:
\begin{itemize}
\item $\mathcal{H}_0$:  The real world.

\item $\mathcal{H}_1$: This hybrid is almost identical to $\mathcal{H}_0$ except how $m_{A,2}$ is calculated. In $\mathcal{H}_0$, 
$m_{A,2}$ is calculated from $m_{B,1}$ through $\textsf{2PC}_2$. In this hybrid, $C_B$ is first recovered from $m_{B,1}$ by $2PC.Sim^{(1)}_B$,
which is fed into $2PC.Sim^{(2)}_B$ to obtain $m_{A,2}$.

\item $\mathcal{H}_2$: This hybrid is almost identical to $\mathcal{H}_1$ except how $\mathbf{m}_{A,2}$ is calculated. In $\mathcal{H}_1$, according to the
definition of $U_{dec}$, we have
\begin{equation}
U_{dec}(\mathbf{m}_{A,2}) = (E_0(\mathbf{x}_A, \mathbf{x}_B',  T^k, 0^{k+\lambda}), 0^k,  T^k).
\end{equation}
In this hybrid, we first obtain $\mathbf{x}_B' = C_B^\dagger (\mathbf{m}_{B,1})$, then sample $U_{dec} \in \mathcal{C}$, and finally compute
\begin{equation}
\mathbf{m}_{A,2} = U_{dec}^\dagger(E_0(\mathbf{x}_A, \mathbf{x}_B',  T^k, 0^{k+\lambda}), 0^k,  T^k).
\end{equation}

\item $\mathcal{H}_3$: This hybrid is almost identical to $\mathcal{H}_2$ except calculating the output directly instead of using the quantum garbled circuit.
More concretely, since we already know $\mathbf{x}_B'$, we can calculate 
\begin{equation}
 \widehat{y_A} \leftarrow Q[C_{A,out}](x_A, x_B', T^k, 0^{k+\lambda}).
\end{equation}
Then we obtain 
\begin{equation}
(\mathbf{m}_{inp},  \widetilde{Q_B}) \leftarrow \textsf{QGSim}( \widehat{y_A} , \{n_i, k_i\}, 1^\lambda).
\end{equation}
We then substitute $E_0(\mathbf{x}_A, \mathbf{x}_B', T^k, 0^{k+\lambda})$ by $\mathbf{m}_{inp}$, namely, we calculate $\mathbf{m}_{A,2}$
as 
\begin{equation}
\mathbf{m}_{A,2} = U_{dec}^\dagger(\mathbf{m}_{inp}, T^k, 0^{k+\lambda}).
\end{equation}

\item $\mathcal{H}_4$: This hybrid is almost identical to $\mathcal{H}_3$ except the following change. Recall that in $\mathcal{H}_3$,
$Q[C_{A,out}](x_A, x_B', T^k, 0^{k+\lambda})$ is calculated through a two-step process:
\begin{equation}
(x_A, x_B', T^k, 0^{k+\lambda}) \to (y_A, 0^\lambda, C_{A,out}) \to C_{A,out}(y_A, 0^\lambda).
\end{equation}
In this hybrid, we replace the result of the second step by $C_{A,out}(0^{m_A+\lambda})$,
where $m_A$ is the number of qubits of $y_A$.
This is exactly the ideal world.
\end{itemize}

Next we show  the equivalence between these hybrids:
\begin{itemize}
\item $\mathcal{H}_0 \approx_c \mathcal{H}_1$: This results from the security of 2PC.
\item $\mathcal{H}_1 \approx_s \mathcal{H}_2$: This stems  from the definition of  $U_{dec}$.
\item $\mathcal{H}_2 \approx_c \mathcal{H}_3$: This results from the security of the quantum garbled circuit.
\item $\mathcal{H}_3 \approx_s \mathcal{H}_4$:   This results from the security of the Clifford authentication code.
\end{itemize}

\section{Experimental demonstration}
\label{sec:exp}

In this section, we perform an experimental demonstration of the QPFE scheme based on the IBM quantum experience platform \cite{ibmQ}.
We consider a two-party quantum functionality in Fig.~\ref{fig:1} and call the two parties $P_1$ and $P_2$. Here, the first qubit belongs to $P_1$ and the second 
qubit belongs to $P_2$. At the end, $P_2$ gets a classical bit as the output, while $P_1$ gets no output.
For example, when $P_1$'s initial qubit is $\ket{0}$ and $P_2$'s initial qubit is $\ket{+}$, then by direct calculation one can obtain that 
$P_2$ will get the output $\frac{1}{2} \{ 0\} , \frac{1}{2} \{ 1\}$ and $P_1$ gets no output. Here,  $\frac{1}{2} \{ 0\} , \frac{1}{2} \{ 1\}$ means 
that with a probability $1/2$, the value is 0 and  with a probability $1/2$, the value is 1.

\begin{figure}[htb]
\centering \includegraphics[width=8cm]{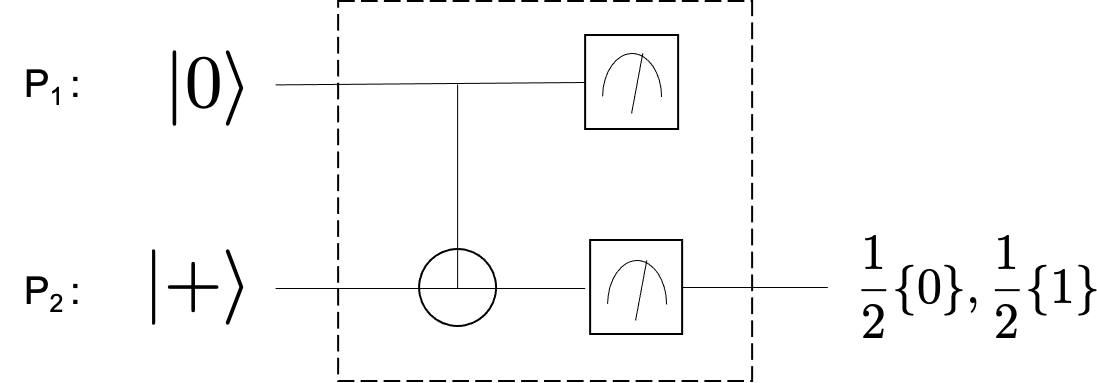}
\caption{Example private functionality and inputs.}
\label{fig:1}
\end{figure}

There are three properties that a correct QPFE protocol should satisfy:
\begin{itemize}
\item $P_2$ is not certain whether $P_1$ is $\ket{0}$ or $\ket{1}$.
\item $P_1$ is not certain $P_2$ is $\ket{+}$ or $\ket{-}$.
\item The third party $P_3$ is ignorant of the quantum states of $P_1$ and $P_2$.
\end{itemize}
For example, a wrong protocol that does not satisfy these three properties can be that $P_2$ directly outputs $\frac{1}{2}  \{ 0\}$,  $\frac{1}{2}  \{ 1\}$. 
In that case, the third party $P_3$ would know that the quantum state of $P_2$ is
$\frac{1}{\sqrt{2}} (\ket{0}+e^{i\phi}\ket{1})$ or their combinations. This is 
in conflict with the third property.

Now we turn to the design of the correct protocol.
For ease of implementation, we set the security parameter to be $\lambda=0$.
 The protocol design consists of four steps.
In the first step, we add a randomized $Z^s$ gate as a substitute of $P_2$'s measurement, as illustrated in Fig.~\ref{fig:expscheme}(a).
Here, $s$ is randomly chosen from $\{0,1\}$.
Call the resulting circuit $F_1$.
In the second step, we apply a randomized Pauli rotations $X^{a_1}Z^{b_1}$ and $X^{a_2}Z^{b_2}$  to both $P_1$ and $P_2$'s private inputs,
and then apply $F_1$ to these two qubits and obtain the outcome $\rho_3$, as illustrated in Fig.~\ref{fig:expscheme}(b). 
Here $P_2$ is ignorant of $a_1$ and $b_1$, and $P_1$ is ignorant of $a_2$ and $b_2$.
By the QPFE scheme, we can apply an $X^{a_3}$ operation on $\rho_3$ and then perform computational-basis measurements
to get the correct circuit output $F(\rho_1, \rho_2)$, as illustrated in Fig.~\ref{fig:expscheme}(c). This is the fourth step. 
In the third step, we use classical PFE to calculate $a_3$, as illustrated in Fig.~\ref{fig:expscheme}(d). 
Here, we are not using classical MPC as that will leak sensitive information on Bob's private circuit. 
The classical PFE takes as inputs $a_1,b_1,a_2,b_2,s$ and outputs $a_3$.
The combined protocol is summarized in Fig.~\ref{fig:expscheme}(e). 
Here, $\hat{\rho_1}$ and $\hat{\rho_2}$ are locally processed by $P_1$ and $P_2$ respectively.
Importantly, to keep Bob's private circuit secret, the encrypted circuit $\hat{F}$ is held by Bob.

\begin{figure}[htb]
\centering \includegraphics[width=9 cm]{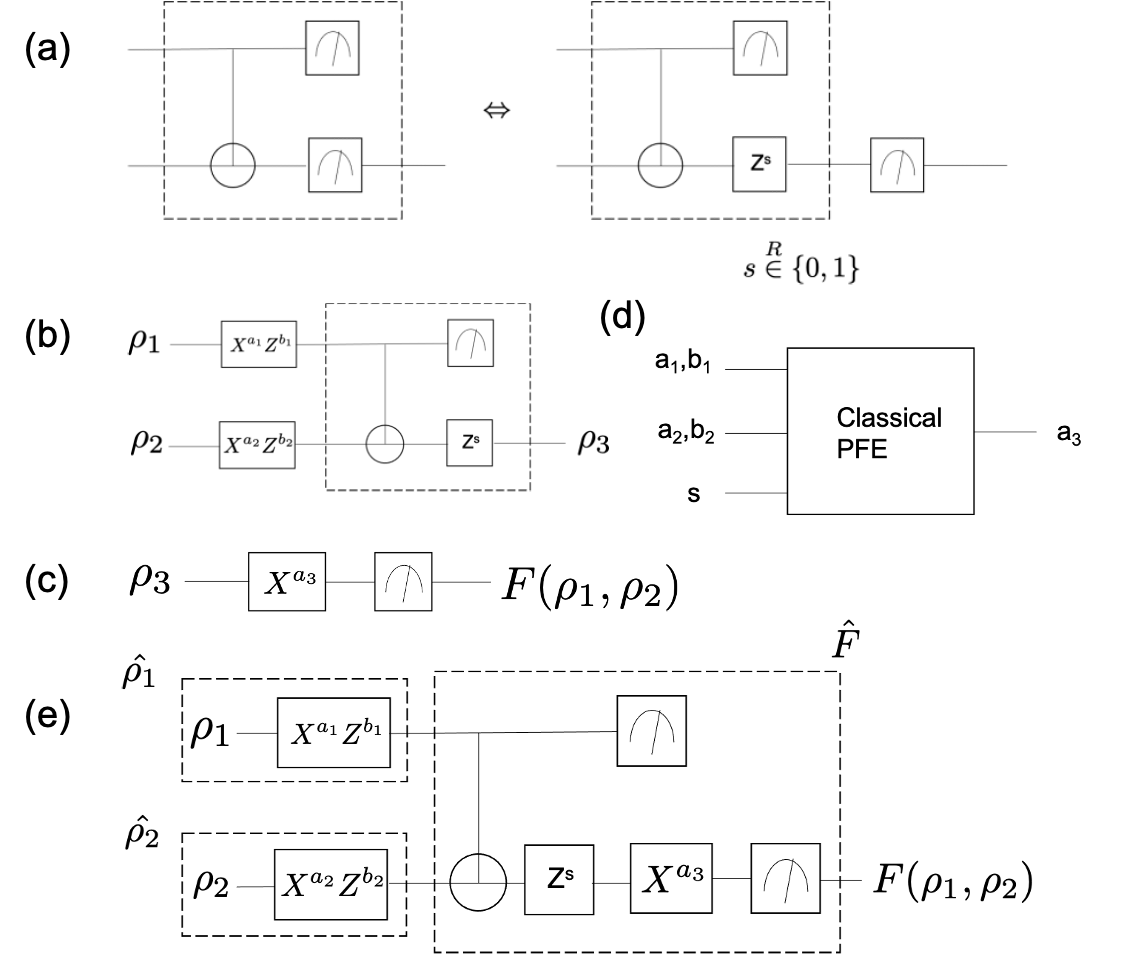}
\caption{(a)-(e) The design of the QPFE protocol for the example functionality. Details are explained in the main text.}
\label{fig:expscheme}
\end{figure}

In our experiment, we set $P_1$'s input to be either $\ket{0}$ or $\ket{1}$ and 
$P_2$'s input to be either $\ket{+}$ or $\ket{-}$. For this specific setting, we notice that the encoding turns the input qubit of the first party 
into either $\ket{0}$ or $\ket{1}$ and the input qubit of the second party into either $\ket{+}$
or $\ket{-}$. The additional $Z^s$ on the second qubit also leaves the second qubit as either  $\ket{+}$
or $\ket{-}$. Hence, for this specific setup, the outcome result $\rho_3$ is already correct without any correction of $X_{a_3}$,
namely $\rho_3 = F(\rho_1,\rho_2)$.
 
After explaining the protocol and the experiment setup, we now perform the experiment on the IBM quantum experience.
The quantum computer we use is ibmq\_lima. The connectivity of its qubits is shown in Fig.~\ref{fig:expresult}(a).
We will use two of the five qubits in the computer for our experimental demonstration. 
We first run the plain circuit $F$, which is depicted in Fig.~\ref{fig:expresult}(b). 
The result is shown in Fig.~\ref{fig:expresult}(c). It can be seen that the outcomes of 0 and 1 have approximately the same probability.
We then run the encrypted circuit depicted in Fig.~\ref{fig:expresult}(d) on the encoded input. 
The resulting probability distribution is shown in Fig.~\ref{fig:expresult}(e). 
 It can be seen that the outcomes of 0 and 1 are also approximately the same.
More precisely, 
\begin{equation}
 | \textrm{Pr} ( y =0 ) - \textrm{Pr}(y=1) | \le 0.012,
\end{equation}
where $y$ is short for $F(\rho_1, \rho_2)$. 
Since the experiment is repeated $N=1024$ times, the standard deviation of the probability
is $1/\sqrt{N} \approx 0.03$. Hence, the outcome distribution of the encrypted circuit
is within the standard deviation of the true distribution  $\frac{1}{2} \{ 0\} , \frac{1}{2} \{ 1\}$.

 \begin{figure}[htb]
\centering \includegraphics[width=9 cm]{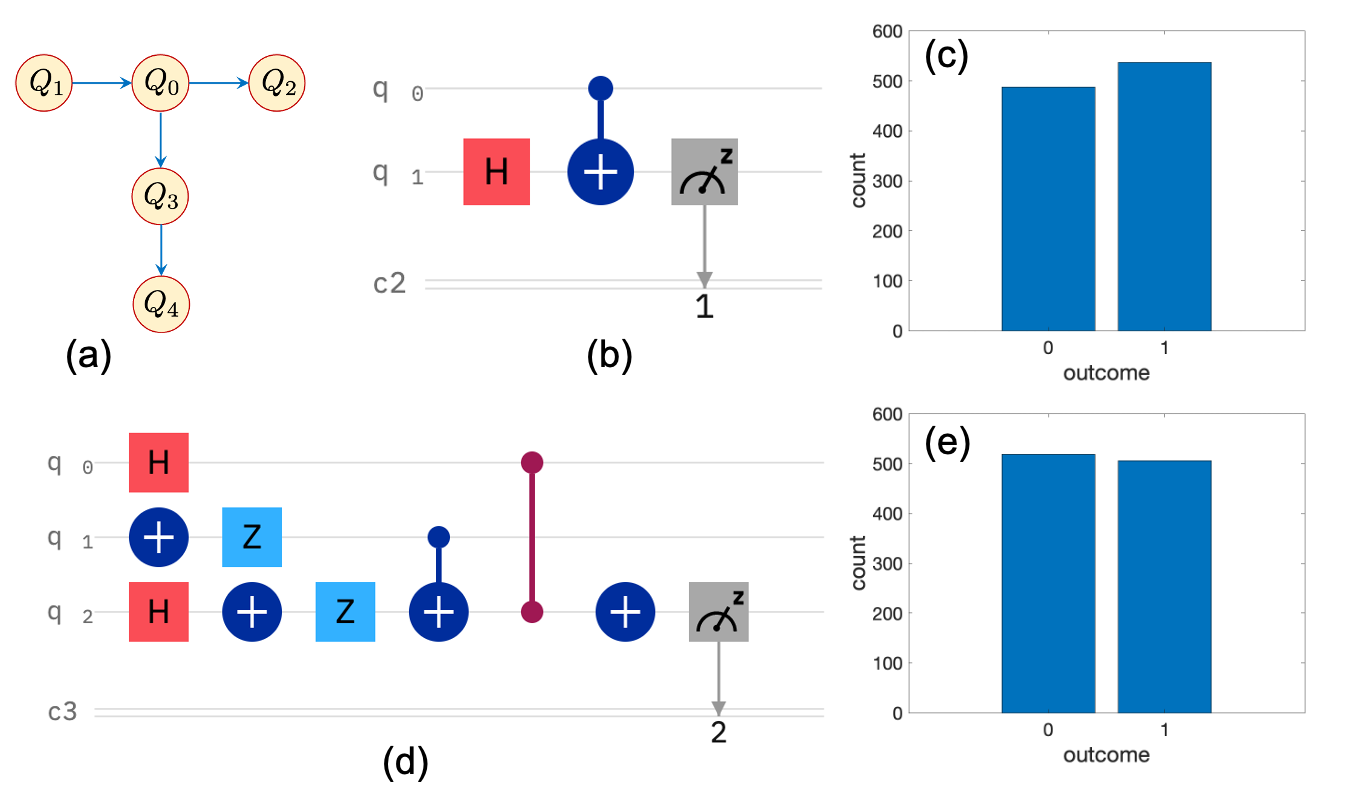}
\caption{The experimental setups and results. (a) The connectivity diagram of the quantum computer ibmq\_lima. 
(b) The plain circuit. 
(c) The outcome distribution of the plain circuit.
 (d) The encrypted circuit.
  (e) The outcome distribution of the encrypted circuit.}
\label{fig:expresult}
\end{figure}

A more faithful experimental demonstration of the QPFE scheme should set Alice and Bob at a distance,
and consequently would involve quantum communication between the distant parties
in addition to the quantum computation part. We leave this more faithful experimental demonstration of QPFE as future work.

\section{Application: quantum copy protection}
\label{sec:appl}

We now apply the QPFE scheme to the quantum copy protection (QCP) problem. 
A quantum copy protection problem considers the following scenario. A quantum software provider issues a quantum program $F$
to the user. The user may query the quantum program $F$ for many inputs $x_i$ and get the corresponding outputs $F(x_i)$. Here, the goal 
of the provider is that their software $F$ cannot be copied by the user so that other users will not be able to use the software unless they buy
another copy of the software from the provider.

The quantum copy protection problem is conceived in Ref.~\cite{aaronson2009quantum}. There, the author shows 
that the quantum copy protection is not possible for learnable programs. Indeed, if a program $F$ is learnable, namely it 
can be recovered from input-output pairs $\{ x_i , F(x_i)\}_{1\le i \le N}$, the user can simply reproduce the program by learning the  input-output pairs 
and then create arbitrary number copies of the program.  Later, it has been shown that even some unlearnable programs
also intrinsically do not allow any quantum copy protection schemes  \cite{ananth2021secure}. 

On the positive side, Aaronson \cite{aaronson2009quantum} has shown a quantum copy protection scheme based on
the quantum random oracle model for unlearnable functions. Later, Aaronson {\it et al.} \cite{aaronson2021new} weakened 
the assumption and has provided a quantum copy protection scheme based on
the classical random oracle model for unlearnable functions. Given the negative result above, the classical random oracle 
assumption is not always satisfied for unlearnable functions. Currently it is unknown for which class of functions  the classical random oracle 
assumption is satisfied.


Before we turn to the formal discussion of QCP, let us first review its definition in the literature \cite{aaronson2021new}. 
A standard QCP consists of three procedures:
\begin{itemize}
\item  $\textsf{Setup}(1^\lambda) \to (sk)$: Given a security parameter $\lambda$, the setup procedure generates a secret key $sk$ for the vendor.
\item  $\textsf{Generate}(sk, F) \to (\rho_F, \{ U_{F,x} \}_{x \in [N]})$: Using the secret key $sk$, the vendor encrypts the program $F$ into a quantum
state and a series of quantum unitaries $(\rho_F, \{ U_{F,x} \}_x) $ and issue them to the user.
\item  $\textsf{Compute}(   \rho_F, \{ U_{F,x} \}_{x \in [N]},x ) \to F(x)$: To compute the program $F$ on a specific input $x$, the user applies $U_{F,x}$
on the quantum state $\rho_F$ and measures the first qubit.
\end{itemize}
In this work, we relax this definition by only keeping its essentials, namely for any input $x$ that the user wishes to compute, the
vendor should return the result $F(x)$. We however do not restrict how the vendor achieves this goal, e.g., the vendor 
does not have to release $(\rho_F, \{ U_{F,x} \}_x) $ before user's query.
Next we define the security of QCP. The ideal functionality $\mathcal{F}_{QCP}$ of QCP takes input $x_i$ 
from the user and input $F$ from the vendor in the $i$th round. It then outputs $F(x_i)$ to the user.
A QCP protocol $\Pi_{QCP}$ is secure if for any independent and identically distributed (iid) adversary $\mathcal{A}$, there exists a simulator $\mathcal{S}$ that fulfills
\begin{equation}
\textrm{Exec}[\Pi_{QCP}, \mathcal{A}, \mathcal{Z}] \approx_c \textrm{Exec}[\mathcal{F}_{QCP}, \mathcal{S}, \mathcal{Z}].
\end{equation}

We now turn to our construction of the quantum copy protection scheme, which is based on
multi-runs of the QPFE scheme. In the $i$th run, Bob holds the private function $F$ and gets no output.
Alice provides a private quantum input $x_i$ and gets back a quantum output $F(x_i)$. 
After $N$ runs, Alice gets $N$ pairs $\{ x_i , F(x_i)\}_{1\le i \le N}$. The detailed process of this scheme is
shown in Protocol \ref{Fig:QCP}.

\begin{algorithm}
\caption{\textsc{QCP based on QPFE}}
\begin{flushleft}
\textbf{Alice Input:} $\{ x_i \}_{1\le i \le N}$.   

\textbf{Bob Input:} $F$. 

\textbf{Ingredient:} QPFE($x_A, x_B, y_A, y_B$), where $x_A$ and $x_B$ are Alice's and Bob's inputs, and 
$y_A$ and $y_B$ are Alice's and Bob's outputs.
\end{flushleft}
\begin{algorithmic}[1]
\FOR{$i$ = 1 to $n$}
 \STATE \quad QPFE($x_i$, $F$, $F(x_i)$, $\emptyset$).
\ENDFOR
\end{algorithmic}
\begin{flushleft}
\textbf{Alice Output:} $\{ F(x_i)\}_{1\le i \le N}$.
\end{flushleft}
\label{Fig:QCP}
\end{algorithm}

Now let us examine the security of this QCP scheme.
According to the security property of QPFE, in the $i$th run, we have
\begin{equation}
\textrm{Exec}[\Pi^i, \mathcal{A}^i, \mathcal{Z}] \approx_c \textrm{Exec}[\mathcal{F}^i, \mathcal{S}^i, \mathcal{Z}],
\end{equation}
where $\Pi^i$ is the QPFE protocol at the $i$-th run, $\mathcal{F}^i$ is the corresponding ideal functionality, and 
$\mathcal{S}^i$ is the corresponding simulator. Since any iid adversary $\mathcal{A}$ of QCP is a concatenation of $\mathcal{A}^i$,
the simulator $\mathcal{S}$ of QCP can be constructed
as a concatenation of $\mathcal{S}^i$, which results in
\begin{equation}
\textrm{Exec}[\Pi_{QCP}, \mathcal{A}, \mathcal{Z}] \approx_c \textrm{Exec}[\mathcal{F}_{QCP}, \mathcal{S}, \mathcal{Z}].
\end{equation}
This finishes the proof of the security of the QCP scheme. 

Note that this scheme fails for learnable functions, as one can still infer a function from multiple pairs of inputs and outputs
if the function is learnable.
However, for all unlearnable functions, it can well protect the function unlike previous schemes. 
This helps us enlarge the class of function that can be copy-protected.
Note that this scheme does not contradict with the no-go theorem for some unlearnable functions \cite{ananth2021secure}. In the no-go theorem,
it is assumed that the software provider has no interaction with the user beyond the initial software release stage,
while there are interactions between the software provider and the user in our construction of the quantum 
copy protection scheme based on QPFE. Hence, there is a tradeoff between the number of interactions and the
class of functions that can be copy protected. We leave the detailed investigation of this tradeoff as an interesting
research problem.

\section{Discussion}
\label{sec:conclusion}

In conclusion, we have initiated the study of QPFE.
We have formally defined the notion of QPFE as a quantum 
analogue of its classical counterpart. Then we presented two QPFE schemes together with their security proofs.
The first scheme makes black-box use of MPQC, while the second one builds on smaller 
primitives. The schemes presented can be applied to various quantum tasks where the quantum function
needs to be kept secret, such as quantum copy protection. An experimental demonstration of the scheme 
is also presented to facilitate its practical usage. 

There are a few interesting avenues for future investigation. First, it is an interesting 
direction to improve the computation and communication cost of the QPFE schemes
 that we present in this work. Second, it is interesting to explore what is 
 the minimum computational assumption that needs to be made to ensure the 
 existence of a QPFE scheme against a dishonest majority. Recently, there have been some works
 that aim to build quantum schemes with even weaker assumptions than the existence of a quantum one-way function \cite{morimae2022quantum}.
 Third, it is interesting to give a more faithful experimental demonstration of the schemes presented here.

\begin{acknowledgements}
This work was supported by the National Natural Science Foundation of China (Basic Science Center Program: 61988101), the National Natural Science Foundation of China (12105105), the Natural Science Foundation of Shanghai (21ZR1415800), the Shanghai Sailing Program (21YF1409800), and the startup fund from East China University of Science and Technology (YH0142214).
\end{acknowledgements}

\bibliographystyle{apsrev4-2}

\bibliography{BibliQPFE}

\end{document}